\providecommand{\thisvolume}[1]{this volume of EPTCS, Open Publishing Association}
\newtheorem{definition}{Definition}
\newtheorem{theorem}{Theorem}
\newtheorem{lemma}{Lemma}
\newtheorem{example}{Example}
\newtheorem{corollary}{Corollary}
\newtheorem{proposition}{Proposition}
\newenvironment{proof}{\begin{trivlist}
                       \item[]{\bf Proof}
                       \hspace{0cm} }{\hfill {\large $\bullet$}
                       \end{trivlist}}
\newcommand{\be}{\begin{eqnarray}}
\newcommand{\ee}{\end{eqnarray}}
\newcommand{\ignore}[1]{}
\newcommand{\qed}{}
\title{A Myhill-Nerode Type Characterization of 2detLIN Languages}
\author{Benedek Nagy
\institute{Department of Mathematics,
Eastern Mediterranean University\\
99628 Famagusta, North Cyprus, Mersin-10, Turkey\\
Department of Computer Science, Institute of Mathematics and Informatics,\\ Eszterh\'azy K\'aroly Catholic University, Eger, Hungary}
\email{nbenedek.inf@gmail.com}
}
\begin{document}

\maketitle

\begin{abstract}
Linear automata are automata with two reading heads starting from the two extremes of the input,
 are equivalent to
$5'\to 3'$
Watson-Crick (WK) finite automata.
The heads read the input in opposite directions and the computation finishes when the heads meet.
 These automata %
  accept the class LIN of linear languages.
 The deterministic counterpart of these models, on the one hand, is less expressive, as only a proper subset of LIN, the class 2detLIN is accepted; and on the other hand, they are also equivalent %
  in the sense of the class of the accepted languages.
  Now, based on these automata models, we characterize the class of 2detLIN languages with a Myhill-Nerode type of equivalence classes. However, as these automata may do the computation of both the prefix and the suffix of the input, we use prefix-suffix pairs in our classes. Additionally, it is proven that finitely many classes in the characterization match with the 2detLIN languages, but we have some constraints on the used prefix-suffix pairs, i.e., the characterization should have the property to be complete and it must not have any crossing pairs.
\end{abstract}


\section{Introduction}
\label{s:intro}

In formal language theory, the class of regular languages plays a crucial role, similar as finite automata in automata theory.
They are widely applied and there are several theoretical studies known about them. One important fact is the characterization of regular languages by the Myhill-Nerode theorem \cite{Myhill,Nerode}.
In a nutshell, every regular language induces finitely many equivalence classes of words considering them as possible prefixes of the words of the language.
This ``if and only if'' characterization, in fact, gives also the minimal completely defined deterministic finite automaton for each regular language and thus, it has very important practical consequences.
The number of states of such minimal automaton  is the same as the number of equivalence classes above, for each language.
This measure is the most known and most used measure for descriptional complexity of regular languages. There are other known measures, e.g., transition complexity \cite{ShengDCFS2010}, nondeterministic state and transition complexities \cite{KaiDLT2007}, union-complexity \cite{DCFS,InfComp}, just to mention a few.

In this paper, we consider a proper superclass of the class of regular languages %
based on a kind of deterministic 2-head automata. This model starts  the computation %
by having its two heads at the two extremes of the input: the first head may read the first and the second head may read the last letter of the input. The computation goes step by step %
till the heads meet (at some position of the input). If the automaton is in a final state
at that time, %
then the computation is accepting and the input is in the accepted language. The class of the nondeterministic variant of these automata  accepts the class of linear languages, another well-known class of formal languages. It is properly between the regular and context-free classes.
Here, usually, we refer to this model of automata as linear automata (based on \cite{Rouss,RAIROlinTrace}). However, very similar models were defined also under various names, e.g., 2-head automata \cite{triangle} %
or biautomata \cite{biAUTncma}. \\
We also recall the concept of Watson-Crick finite automata
which belongs to a special field of DNA
computing. %
From the end of the last century, DNA computing has emerged as a relatively new computational paradigm \cite{DNA-book}.
Watson-Crick automata (abbreviated as WK automata) have been introduced in \cite{Freund},
for details and early results see also \cite{DNA-book}. %
A WK automaton works on a double-stranded tape called Watson-Crick tape (i.e., on a DNA molecule), whose strands are scanned separately by read only heads. The symbols in the corresponding cells of the double-stranded tapes are related by the Watson-Crick complementarity relation (a symmetric and bijective relation in the nature with pairs Adenine-Thymine and Cytosine-Guanine). %
The two strands of a DNA molecule have opposite $5'\rightarrow 3'$  orientations. The %
$5'\rightarrow 3'$  WK automata are more realistic in the sense that both heads use the same biochemical direction (that is, actually, opposite physical directions) \cite{LeupoldN09,Leupold,DNA2008}. A WK automaton is sensing if it knows whether the heads are at the same position.
The sensing $5'\to 3'$ WK finite automata work essentially in the same way as linear automata, but they may read strings in a transition.
Their 1-limited variant, in which exactly one letter is read in each transition,
has the same power, i.e., they accept the same family of languages as the original model (\cite{CiE2009,NaCoNP22,AFL2017}). There are numerous variants of these automata where some extensions or restrictions are applied including stateless \cite{Annales}, state- and quasi-deterministic and reversible variants \cite{state-det,quasi-det,NCMA25-rev2head}, jumping $5'\to 3'$ WK automata \cite{KocmanKMN22}, as well as,
$5'\to 3'$ WK multi-counter and pushdown automata \cite{EgeciogluHN10,EgeciogluHN11,HegedusNE12,2headpushdown} and $5'\to 3'$ WK automata accepting necklaces \cite{NCMA24neck},
just to mention a few. %

 We are interested in a proper subclass of the linear languages, namely 2detLIN, the class that is accepted by the deterministic variant of the linear automata (and of the sensing $5' \to 3'$ WK automata), as they are described in details in, e.g., \cite{Nagy2013,ActInf2detLIN}.
 This class is still a proper superset of the class of regular languages.
 Here, we give a characterization of 2detLIN that is somewhat similar to Myhill-Nerode characterization of the regular languages.
 It is done by using prefix-suffix pairs. We show some important properties of the pairs that can be used in the characterization.
 Although there are significant differences between the original Myhill-Nerode characterization result and our result, we believe that our results could lead to a kind of similar descriptional complexity measure to a larger class of languages than the original results which can be used for the class of regular languages.

 Because of the page limit some of the proofs are omitted. %

\section{Definitions and Preliminaries}
\label{s:pre}

We assume that the reader is familiar with the basic concepts of formal languages and automata, otherwise she or he is referred, e.g., to \cite{HopUl,Handb} for the concepts not explained in detail here. We denote the empty word by $\lambda$.
The set of nonnegative integers is denoted by $\mathbb{N}$.

There are various classes in the Chomsky hierarchy. We briefly recall here the classes of regular and linear languages.
A \emph{generative grammar} is a four tuple $(N,T,S,P)$ with two disjoint, finite, nonempty alphabets $N$ and $T$, where the former is called nonterminal alphabet, the latter is called terminal alphabet. The symbol $S\in N$ is the start (a.k.a. sentence) symbol and $P$ is the finite set of productions (a.k.a. rewriting rules). Each production is of the form $u\to v$ where $u$ must contain at least one nonterminal symbol.
A generative grammar $(N,T,S,P)$ is \emph{regular} (in some places they are also called right-linear) if each production of the grammar is in one of the following forms: $A\to w$ (with $A\in N, w\in T^*$) and $A \to wB$ (with $A,B\in N, w\in T^*$).
Further, a generative grammar   is \emph{linear} if each of its productions is in one of the following forms: $A\to w$ (with $A\in N, w\in T^*$) and $A \to u B v$ (with $A,B\in N, u,v\in T^*$).
Obviously, every regular grammar is also linear at the same time. These classes of grammars generate the classes of regular and linear languages, respectively.
We recall here some special linear grammars: if in a linear grammar for each production with a nonterminal on the right side $A\to uBv$, $|u| =n$, $|v|=m$ holds, then the grammar is called \emph{$k$-rated linear} with the value $k=\frac{m}{n}, \ (m,n\in \mathbb{N}, n\ne 0)$ \cite{Amar2,linPUMP}. These grammars generate \emph{$k$-rated linear languages}. The union of the sets of $k$-rated linear languages for any nonnegative rational value of $k$ is called the family of \emph{fix-rated linear languages}. Observe that, in fact, the $0$-rated linear grammars and languages are the regular grammars and languages. The $1$-rated linear grammars and languages are usually referred to as \emph{even-linear grammars and languages} (\cite{Amar,EVENlin}).

The classes of regular and linear languages can be accepted by the class of traditional finite automata and a class of 2-head automata, respectively.
Let us discuss, first, the case of regular languages.
Now, let us recall the concept of finite automata. A five tuple $A=(Q,T,q_0,\delta,F)$ is a \emph{finite automaton} with the finite nonempty set of states $Q$, with a finite nonempty input alphabet $T$, an initial state $q_0\in Q$, a set of final (a.k.a. accepting) states $F$ and a transition function $\delta$. The latter is defined, in general, as $\delta: Q\times (T\cup \{\lambda\}) \to 2^Q$. In this general case, the model is known as nondeterministic finite automata. There is a more restricted version of the finite automata with $\delta: Q\times T \to Q$ called \emph{deterministic finite automata}. Automata are used to accept formal languages.
 It is well-known that the classes of both the nondeterministic and deterministic finite automata  recognize exactly the class of regular languages.
 There are some 2-head extensions of these traditional models, that play a central role for us.
A five tuple $A'=(Q,T,q_0,\delta,F)$ is a 2-head finite automaton (a.k.a. \emph{linear automaton}, \cite{Rouss,RAIROlinTrace}) where $Q,T,q_0$ and $F$ have the same roles as in traditional finite automata, but $\delta$ is defined in a different way:
 $\delta: ( Q \times T \times \{\lambda \} \cup Q\times \{\lambda\} \times T ) \to 2^Q$.

Further, a configuration of a linear %
 automaton is a pair $(q,w)$ where $q$ is the current state of the automaton and $w$ is the part of the input word which has not been processed (read) yet. For $w'\in T^*$, $xy \in T$, $q,q'\in Q$, we write a computation step %
  between two configurations as:
$(q,xw'y)\Rightarrow(q',w' )$ if and only if $q'\in \delta(q,x,y)$. Notice that in such a computation step either $x\in T$ and $y=\lambda$ or $y\in T$ and $x=\lambda$, i.e., exactly one of the heads is reading an input letter.
We denote the reflexive and transitive closure of the relation $\Rightarrow$ (one step of a computation) %
 by $\Rightarrow^*$, and refer to it as the computation relation. Therefore, for a given $w\in T^*$, an accepting computation is a sequence of computation steps %
  of the form $(q_0,w) \Rightarrow^* (q,\lambda)$, starting from the initial state and ending in a state $q\in F$ with no input left. Finally,
the language accepted by a linear automaton $M$ is:
$$L(M)=\{ w\in T^*  | (q_0,w) \Rightarrow^* (q,\lambda),  q\in F\}.$$

Note that here we have defined a kind of restricted (1-limited) variant, where exactly one input letter is read in each transition. (In   the
more general variant %
 both heads may read a letter %
in a transition, however, the described model is equivalent to this %
 more general model %
 in the sense of the class of the accepted languages.)

 It is known that the class of  linear automata accepts the class of linear languages. Now, we are interested in the deterministic variant of them. As usual, we say that an automaton is deterministic if at any possible configuration %
  there is at most one way to continue the computation. The deterministic counterpart of linear automata can accept only a special subclass of the class of linear languages, the \emph{class 2detLIN}. It is known that this class is a superclass of the class of regular languages containing various interesting linear languages, including all fix-rated linear languages. On the other hand, 2detLIN is incomparable to the class detLIN, the class accepted by deterministic one-turn pushdown automaton (with the deterministic counterpart of another well-known automata model accepting the class of linear languages).

 We also recall that a closely related model, the sensing $5' \to 3'$ Watson-Crick finite automata work in a very similar manner.
 There is a very important difference between the 2-head automata model we have defined and the Watson-Crick automata, namely that the latter models are able to read strings in a transition. Nevertheless, in \cite{AFL2017} and in \cite{ActInf2detLIN} it is proven that this feature does not help for the model to accept  larger classes of languages than the classes LIN and 2detLIN, respectively. %
 Therefore, we may use the definition we gave above to define the class we are interested in.

Further, we may assume that all states of the automaton $A$ is reachable, i.e., for each state $q$, there is an input word $w_q$ such that the computation of $w_q$ ends in state $q$: $(q_0,w_q) \Rightarrow^* (q, \lambda)$. (Those states that are not reachable do not have any effect on the computations of the automaton, and thus, they can simply be removed form the set of states without changing the accepted language.) This assumption could be important when some properties of the automaton are analyzed, e.g., in the proposition below.

Formally, we can write that a linear automaton is deterministic, if and only if for each pair of $
w\in T^*$ and $
q \in Q$ there exists at most one $w' \in T^*$ and $q' \in Q$ such that $(q,w) \Rightarrow (q',w')$. This property is defined as a constraint on all possible computations of the automaton, however,
 it gives restriction for the used automaton itself. Thus,
deterministic linear automata can be characterized as follows. %

\begin{proposition}\label{linDET1head}
A linear automaton is deterministic if and only if for each $q$ of its states, either
\begin{itemize}
\item
  $\delta(q,a,\lambda) = \emptyset$ for all $a\in T$; 
 and  $| \delta(q,\lambda,a) | \leq 1$  for each $a\in T$; \\
or
 \item
 $\delta(q,\lambda,a) = \emptyset$ for all $a\in T$; 
and  $|\delta(q,a,\lambda) | \leq 1$  for each $a\in T$.
\end{itemize}
\end{proposition}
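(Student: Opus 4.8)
The plan is to prove the two implications separately, reading the stated local conditions as: each state $q$ is either of \emph{right type} (the first alternative: $\delta(q,a,\lambda)=\emptyset$ for all $a\in T$ and $|\delta(q,\lambda,a)|\le 1$ for all $a$) or of \emph{left type} (the symmetric second alternative). Throughout I would use that a single computation step from a configuration $(q,w)$ with $w\neq\lambda$ can only peel off the first letter of $w$ (via some $\delta(q,a,\lambda)$) or its last letter (via some $\delta(q,\lambda,b)$), and that the remaining word $w'$ is then completely determined by which head moves. Hence the only possible sources of two distinct successors of $(q,w)$ are (i) two targets for the same head on the same letter, or (ii) the two heads acting on $w$ and producing different remaining words or different states.

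For the direction ``conditions $\Rightarrow$ deterministic'' I would take an arbitrary configuration $(q,w)$. If $w=\lambda$ there is no successor. If $w\neq\lambda$ and $q$ is of right type, then no left transition is enabled, so every successor arises by reading the last letter $b$ of $w$; the remaining word is uniquely $w$ with its last letter deleted, and $|\delta(q,\lambda,b)|\le 1$ forces at most one successor. The left-type case is symmetric. Note this already covers $|w|=1$, because right type forbids \emph{all} left transitions, including the one that could read the single letter from the left head. This direction is routine.

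The content is in ``deterministic $\Rightarrow$ conditions'', which I would prove by contraposition: assuming a state $q$ satisfies neither alternative, I build a word $w$ for which $(q,w)$ has two successors. First, if some $|\delta(q,a,\lambda)|\ge 2$ or some $|\delta(q,\lambda,a)|\ge 2$, then the single-letter word $w=a$ already has two distinct successors (two states with the same empty remainder). Otherwise all multiplicities are $\le 1$, and failing both alternatives forces $q$ to possess \emph{both} a left transition $\delta(q,a,\lambda)=\{q_1\}$ and a right transition $\delta(q,\lambda,b)=\{q_2\}$. This ``both heads active'' situation is the main obstacle, since the two reads need not visibly clash. I would separate it into subcases: if $a\neq b$, then $w=ab$ yields the successors $(q_1,b)$ and $(q_2,a)$, distinct because $a\ne b$; if $a=b$ but $q_1\neq q_2$, then $w=aa$ yields $(q_1,a)$ and $(q_2,a)$; and if $a=b$ and $q_1=q_2$, a witness of the form $w=aca$ with $c\neq a$ separates the two reads into $(q_1,ca)$ and $(q_2,ac)$. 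The genuinely delicate point surfaces here: such a $c$ exists only when $|T|\ge 2$, and over a unary alphabet a state reading the same letter into the same state from either head is in fact deterministic, so this degenerate subcase has to be excluded or handled on its own.

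Finally, I would use the reachability assumption to upgrade the abstract witness $(q,w)$ into a genuine nondeterministic computation. Since $q$ is reachable, fix a path $(q_0,w_q)\Rightarrow^*(q,\lambda)$ and record which of its steps read from the left and which from the right; this factors $w_q=\ell r$, where $\ell$ is the subword consumed left-to-right and $r$ the subword consumed right-to-left. Replaying the identical sequence of reads on the input $\ell\, w\, r$ never touches the inserted middle block (the read pattern depends only on the left/right schedule, not on content), so $(q_0,\ell w r)\Rightarrow^*(q,w)$, exhibiting a \emph{reachable} configuration with two successors and contradicting determinism. This ``middle-insertion'' step is precisely where the assumption that every state is reachable does its work.
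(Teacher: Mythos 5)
The paper never actually proves this proposition---it is among the proofs omitted for the page limit---so your attempt can only be judged on its own merits. Your architecture is sound: the ``conditions imply determinism'' direction is indeed routine; the contrapositive case analysis (some multiplicity $\geq 2$; both heads enabled with $a\neq b$; $a=b$ with $q_1\neq q_2$; $a=b$, $q_1=q_2$ with a second letter $c$) produces correct witnesses in every case it covers; and your final ``middle-insertion'' replay is exactly the idea the paper isolates as Lemma~\ref{lemIMPORTANT}: replaying the computation of $w_q=\ell r$ on the padded input $\ell w r$ reaches $(q,w)$, which is the right way to make the reachability hypothesis do its work if determinism is read as a condition on configurations occurring in genuine computations. (Under the paper's literal formalization, which quantifies over \emph{all} pairs $q\in Q$, $w\in T^*$, that last step is not even needed: the witness configuration already violates the definition.)

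The genuine gap is the degenerate case you flag and then abandon: $|T|=1$ with $\delta(q,a,\lambda)=\delta(q,\lambda,a)=\{q_1\}$. Saying it ``has to be excluded or handled on its own'' is not a proof, so your argument for ``deterministic $\Rightarrow$ conditions'' is incomplete as written. Moreover, the case cannot be closed by a cleverer witness: on every input $a^n$ both reads lead to the same successor configuration $(q_1,a^{n-1})$, so such an automaton \emph{is} deterministic under the paper's formal definition (at most one pair $(q',w')$ with $(q,w)\Rightarrow(q',w')$) while violating both bullets of the proposition---under that reading the statement is simply false in this corner, and no proof exists. The only way to finish is to commit to the paper's prose reading, ``at most one way to continue the computation,'' i.e.\ at most one \emph{applicable transition}: then two distinct enabled transitions at $q$ (one per head) already constitute nondeterminism even when they yield identical successor configurations, the unary case closes immediately, and in fact your whole case distinction on $a$ versus $b$ and $q_1$ versus $q_2$ collapses to the single observation that on $w=ab$ two transitions are applicable. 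You should state explicitly which definition you adopt and note the discrepancy; without that, the proof has a hole exactly where you noticed it.
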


We refer to the transitions $\delta(q,a,b) = \emptyset$ ($ab\in T$, i.e., one of $a$ and $b$ is a letter, the other is $\lambda$), as transitions which are not defined in the automaton. Thus, we may interpret the previous statement as follows. In a deterministic linear automaton at each state, we may have transitions defined only for at most one of the heads.

In automata theory, there are usually two main variants of the used deterministic finite automata.
If a finite automaton is incomplete (we say this, when its transition function is only a partial function), it may happen that the automaton is unable to read (and thus to accept) some of the possible input words, in these cases, the automaton gets stuck and the computation halts without accepting. In contrast, in the case of a completely defined finite automaton, the automaton can read any input and can do the computation such that the whole input has been processed.
 Somewhat similarly, we may also define this variant of linear automata.
The main difference in the work of the  ``incomplete'' and ``completely defined'' (shortly, complete) linear automata is the same as at finite automata, however, based on Proposition \ref{linDET1head}, we may characterize the latter ones as follows.

\begin{proposition}\label{Compl-linDET}
A deterministic linear automaton is complete if and only if for each $q$ of its states, either
\begin{itemize}
\item
  $\delta(q,a,\lambda) = \emptyset$  for all $a\in T$;
 and $\delta(q,\lambda,a) \in Q$  for each $a\in T$; \\
or
\item
  $\delta(q,\lambda,a) = \emptyset$  for all $a\in T$;
and $\delta(q,a,\lambda) \in Q$  for each $a\in T$.
\end{itemize}
\end{proposition}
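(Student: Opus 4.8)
The plan is to prove the two directions separately, using Proposition~\ref{linDET1head} to reduce the statement to a single point: upgrading the cardinality bound ``$\le 1$'' on the active head to ``$=1$'' (that is, to a totally defined transition landing in $Q$). Throughout I would lean on the standing assumption that every state is reachable, which the excerpt flags as relevant precisely ``in the proposition below''.

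For the ``if'' direction I would argue directly that the stated condition forces every input to be fully consumed. Starting from $(q_0,w)$ with $w\ne\lambda$, the current state $q$ falls into one of the two cases; in either case the corresponding head has a letter available at the relevant end of $w$ (its last letter for the second head, its first letter for the first head) and the transition for that letter is defined, lying in $Q$. Hence a computation step always exists and it strictly decreases $|w|$ by one. An induction on $|w|$ then shows that the unique computation reaches a configuration $(q',\lambda)$, so the whole input is processed and the automaton is complete.

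For the ``only if'' direction I would argue by contraposition. Assume the condition fails at some state $q$; since the automaton is deterministic, Proposition~\ref{linDET1head} puts $q$ into one of the two cases, say the one where only the second head is active, and the failure means $\delta(q,\lambda,a)=\emptyset$ for some $a\in T$ (the first-head subcase being symmetric). Because $q$ is reachable there is a word $w_q=uv$ with $(q_0,w_q)\Rightarrow^*(q,\lambda)$, where $u$ is the prefix consumed by the first head and $v$ the suffix consumed by the second head. The key step is to insert the offending letter in the middle: I would run the computation on the word $uav$ and show, by induction along the original interleaving of head-moves, that it passes through exactly the same sequence of states and reaches the configuration $(q,a)$. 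At this configuration the first head cannot move (its transitions are empty at $q$) and the second head cannot read $a$ (since $\delta(q,\lambda,a)=\emptyset$), so the computation is stuck with nonempty remaining input, witnessing that the automaton is not complete.

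The main obstacle is this last construction: reachability as stated yields only end-configurations $(q,\lambda)$, whereas I need a configuration $(q,\cdot)$ with input still pending. The insertion trick converts the former into the latter, and the point that needs care is verifying that inserting a single letter strictly between the portions read by the two heads does not perturb the run before $q$ is reached -- the inserted letter never sits at either end until $u$ and $v$ are both exhausted, so determinism guarantees that the original run and the modified run agree step by step up to the moment state $q$ is entered.
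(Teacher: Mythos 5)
Your proposal is correct. Note first that the paper itself contains no proof of this proposition (it is among the proofs omitted for the page limit), so there is no official argument to compare against line by line; the closest thing the paper offers is Lemma~\ref{lemIMPORTANT}, and your insertion trick is essentially that lemma's content. The important difference is that Lemma~\ref{lemIMPORTANT} is stated only for \emph{complete} deterministic linear automata, which you could not cite in your contrapositive direction, since completeness is precisely what is being refuted there; you correctly avoid this circularity by re-deriving the insertion property from determinism alone (via Proposition~\ref{linDET1head}) together with the run $(q_0,uv)\Rightarrow^*(q,\lambda)$ supplied by the reachability convention. Both directions are otherwise sound: the ``if'' direction is the expected induction on the length of the unread input (it does not even need determinism), and the ``only if'' direction correctly reduces the failure of the condition at a reachable state $q$ to a stuck configuration $(q,a)$, which witnesses incompleteness because the deterministic automaton has no other computation on $uav$.

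One sentence in your closing paragraph is stated too strongly, though the argument survives without it: it is not true that the inserted letter ``never sits at either end until $u$ and $v$ are both exhausted''. Once the first head has consumed all of $u$ while part of $v$ remains unread, the unread portion of $uav$ is $av''$ for some suffix $v''$ of $v$, and $a$ does sit at the left end. What your induction actually needs is weaker and true: a first-head step of the original run can occur only while some of $u$ is unread (since the first head reads exactly $u$), and symmetrically for the second head, so the letter scanned at each step is the same in both runs; moreover, by Proposition~\ref{linDET1head}, in every state where one head has a defined transition the opposite head is disabled, so the spurious visibility of $a$ at the left end cannot divert the computation. With that restatement, determinism does force the two runs to agree step by step until $(q,a)$ is reached, exactly as you claim.
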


Moreover, for each deterministic linear automaton, we may construct a complete deterministic linear automaton accepting the same language by adding a sink state, if necessary. This technique is similar to the one used in the case of deterministic finite automata for the regular languages.

Based on the previous proposition and fact we may always assume that our linear automaton accepting a language in 2detLIN is complete.

About the work of linear deterministic automata we state %
the following useful property. It is a kind of analogous property of the deterministic finite automata that when it does the computation on an input $w$, then the initial part of the computation is the same as the computation on a prefix of $w$. %
As linear automata may consume the input from its both extremes, we have a somewhat more complex statement and therefore we state %
it formally. 

\begin{lemma}\label{lemIMPORTANT}
Let a complete deterministic linear automaton $A$ %
 and an input word $w\in T^*$ be given. Let the computation on $w$ by $A$  be $(q_0, w)\Rightarrow^* (q, \lambda)$
such that, the prefix $u$ and the suffix $v$ of $w$ ($w=uv$) were read by the first and second head, respectively, during the computation.
This computation is a $|w|$-step long computation.
Then for any input $u w' v$ with $w'\in T^*$, the first $|w|$ steps of the computation %
are  $(q_0, uw'v) \Rightarrow^* (q, w')$.
\end{lemma}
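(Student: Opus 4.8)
The plan is to argue by induction on the number of computation steps, showing that the first $|w|$ steps of the computation on $uw'v$ retrace the computation on $w$ state-for-state, because during these steps the inserted factor $w'$ is never touched by either head. The engine of the argument is determinism in the form of Proposition~\ref{linDET1head}: at every state the automaton may move only one of its two heads, so the \emph{choice} of which head reads next is a function of the current state alone and is insensitive to the letters sitting in the middle of the tape.

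Concretely, write $u = u_1\cdots u_k$ and $v = v_1\cdots v_l$, so that $k+l=|w|$, and let $q_0, q_1, \ldots, q_{|w|}=q$ be the sequence of states visited during the computation on $w$. Since $u$ is consumed by the first head and $v$ by the second, after $i$ steps the first head has read some prefix $u_1\cdots u_{p_i}$ of $u$ and the second head the suffix $v_{l-(i-p_i)+1}\cdots v_l$ of $v$, leaving the configuration $(q_i,\, u_{p_i+1}\cdots u_k\, v_1\cdots v_{l-(i-p_i)})$. I would then prove by induction on $i$ (for $0\le i\le |w|$) the invariant that, on input $uw'v$, the automaton is after $i$ steps in the configuration $(q_i,\, u_{p_i+1}\cdots u_k\, w'\, v_1\cdots v_{l-(i-p_i)})$; that is, the same state, the same unread portion of $u$ on the left and of $v$ on the right, with $w'$ untouched in the middle. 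The base case $i=0$ is immediate, as the two computations start in $(q_0, uw'v)$ and $(q_0, uv)$, respectively.

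For the inductive step, suppose the invariant holds after $i<|w|$ steps. By Proposition~\ref{linDET1head} the state $q_i$ determines which head moves. If it is the first head, then in the $w$-computation it reads $u_{p_i+1}$ (here $p_i<k$ necessarily, since otherwise the first head would have to read a letter of $v$, contradicting the hypothesis that $v$ is consumed by the second head); on input $uw'v$ the leftmost unread letter is also $u_{p_i+1}$, the same transition $\delta(q_i,u_{p_i+1},\lambda)$ fires, and the invariant is restored with $p_{i+1}=p_i+1$. The case of the second head is symmetric: it reads $v_{l-(i-p_i)}$, the rightmost unread letter agrees on input $uw'v$, and $w'$ is not reached because an unread letter of $v$ still lies to its right. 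Letting $i$ run up to $|w|$ yields $p_{|w|}=k$ and $|w|-p_{|w|}=l$, so all of $u$ and all of $v$ have been read, the unread portion is exactly $w'$, and the state is $q$; that is, $(q_0, uw'v)\Rightarrow^*(q,w')$ in $|w|$ steps, as claimed.

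I expect the only genuinely delicate point to be verifying that the active head always finds the expected letter of $u$ or $v$ rather than stepping into $w'$. This is exactly where the hypothesis that $u$ and $v$ are read by the first and second head, respectively, is used: it guarantees that whenever the first head moves there remains an unread letter of $u$, and whenever the second head moves there remains an unread letter of $v$, so that during the first $|w|$ steps neither head ever reaches the boundary of $w'$. Determinism (Proposition~\ref{linDET1head}) then forces the two computations to agree transition by transition, and completeness is needed only to ensure the transitions in question are defined.
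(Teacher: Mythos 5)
Your proof is correct. The paper itself omits the proof of this lemma (it is among those dropped for the stated page limit), so there is no official argument to compare against; your induction on the step number --- using Proposition~\ref{linDET1head} to conclude that the current state alone determines which head moves next, and the hypothesis that $u$ and $v$ are read by the first and second head, respectively, to guarantee that the active head always finds the expected letter of $u$ or $v$ rather than stepping into $w'$ --- is exactly the natural argument the statement calls for, and it handles the key boundary point ($p_i < k$, resp.\ $i - p_i < l$, whenever the corresponding head moves) correctly.
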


We note here that in \cite{RAIRO-Zita} for similar models,
 specific functions were defined and used %
 that give the following information for every input:
 which of the heads is stepping in which step of the computation and which head reads the given letter of the input.

Finally, in this section we recall a very important and useful characterization of the regular languages.

Let a language $L \subseteq T^*$ be given. Based on it, we define the equivalence relation: for any $x,y \in T^*$,

 \begin{center}
 $x \equiv_L y \textnormal{  if and only if }
xw \in L \Leftrightarrow  yw\in L \textnormal{ for every } w\in T^*.$
\end{center}
 That is, two words are equivalent if exactly the same continuations of them are in $L$.
The number of the equivalence classes  of the relation $\equiv_L$ is called the index of the language $L$.
By the Myhill-Nerode theorem, a language $L$ is regular if and only if the relation $\equiv_L$  %
has a finite index, i.e., the number of the equivalence classes is finite. Moreover, the index of a regular language
 is then the same as the minimal number of the states in a completely defined finite automaton accepting the language $L$.

In this paper, our aim is to give a kind of similar if and only if characterization of the languages in the class 2detLIN.

\section{Equivalent classes by pairs of prefixes and suffixes}
\label{s:results}

As the computation on %
the input by linear automata goes by reading not only the prefix, but maybe also the suffix of the input word,
we use \textit{pre}fix-\textit{su}ffix-pairs (shortly, \textit{presu}s) in our characterization. Let us consider a language $L$ over the alphabet $T$.
We say that the prefix-suffix-pair $(u_1,v_1)$ is equivalent to the presu $(u_2,v_2)$ with respect to the language $L$,
if for every  word $w\in T^*$, $u_1 w v_1 \in L   \Leftrightarrow     u_2 w v_2 \in L $.
We call a set of equivalence classes of presus a \textit{border classification}, BC for short.
However, a BC not need to cover all prefix-suffix pairs.
We also define \textit{pseudo BC}s, in which in each class, the presus are equivalent to each other, but it may happen that some of the classes contain presus that are also equivalent to each other. From a pseudo BC, a BC can be obtained by joining those classes that contain presus that are equivalent to each other. We say that a (pseudo) BC contains a presu $(u,v)$, if it %
appears in a class of the (pseudo) BC.

To characterize the languages of 2detLIN, we need some additional conditions. In the sequel, we list them.

\begin{definition}
We say that a BC (or a pseudo BC) is \emph{complete}, if for each word $w\in T^*$ it contains exactly one pair $(u,v)$ such that $w = uv$.
\end{definition}

\begin{definition}
We say that a BC (or a pseudo BC) has a \emph{crossing pair}, if it contains both presus $(u_1,v_1)$,  $(u_2,v_2)$
where $u_1$ is a proper prefix of $u_2$ and $v_2$ is a proper suffix of $v_1$.
The presus $(u_1,v_1)$,  $(u_2,v_2)$ are referred as a crossing pair.
\end{definition}

For better understanding these concepts we show some examples.
\begin{example}
Let us consider the regular language $a^*b^*$. One may consider BC $\Omega_1$ with only one class containing all pairs of the form $(a^*,\lambda)$. It is easy to see that $\Omega_1$ is not complete, since, for instance, there is no presu $(u,v)$ in it with $uv = aaab$. On the other hand, $\Omega_1$ does not contain any crossing pairs.

Consider now, the BC $\Omega_2$ with three classes $C_1 = \{(u,v)~|~u\in a^*, v\in b^*\}$ and $C_2 = \{(u,v)~|~ u\in a^*b^*b, v\in b^*\}$ and $C_3 = \{(u,v)~|~ u\in a^*, v\in a a^* b^*\}$.
The BC $\Omega_2$ is not complete, since, e.g., for the word $ab$ it contains the presus $(a,b)\in C_1$ and $(ab,\lambda)\in C_2$. Furthermore, $\Omega_2$ contains crossing pairs, as $(aaa,b) \in C_1$ and $(a,abb) \in C_3$ appear in it.
\end{example}

\begin{definition} Let us fix a language $L$ and a BC for $L$.
The index of the BC is the number of equivalence classes in it.
\end{definition}

The following statement is a direct consequence of the definitions.
\begin{lemma}\label{lem:pseudoBC}
Let a pseudo BC $\Omega$ be given for a language $L$. Then, there is a BC for $L$ that
has index at most the number of classes in the pseudo BC $\Omega$. \\
Further, in general, if for a language $L$ there is a pseudo BC with finitely many classes, then
there is a BC for $L$ with a finite index.
\end{lemma}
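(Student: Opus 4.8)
The plan is to exploit the fact that ``being an equivalent presu with respect to $L$'' is itself a genuine equivalence relation on the set of all prefix-suffix pairs; call it $\sim$, so that $(u_1,v_1)\sim (u_2,v_2)$ precisely when $u_1 w v_1 \in L \Leftrightarrow u_2 w v_2 \in L$ for every $w\in T^*$. Reflexivity, symmetry and transitivity of $\sim$ are immediate from the biconditional in its definition. The only difference between a BC and a pseudo BC is that in a BC the classes are the restriction of $\sim$ to the presus appearing (two appearing presus lie in the same class if and only if they are $\sim$-equivalent), whereas a pseudo BC is allowed to split a single $\sim$-class into several classes. Thus the task is to recover the coarser, honest partition from the finer one without increasing the number of classes.

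First I would observe that, since all presus inside one class $C_i$ of $\Omega$ are pairwise $\sim$-equivalent, each $C_i$ is contained in a single $\sim$-class. Writing $\Omega = \{C_1, C_2, \ldots\}$ (finitely or infinitely many classes), I would then define a relation on the index set by declaring $i$ and $j$ related whenever some presu of $C_i$ is $\sim$-equivalent to some presu of $C_j$. By transitivity of $\sim$ together with the internal homogeneity of the classes, ``some presu of $C_i$ is equivalent to some presu of $C_j$'' coincides with ``every presu of $C_i$ is equivalent to every presu of $C_j$''; this makes the relation well defined, and it inherits reflexivity, symmetry and transitivity from $\sim$, so it is an equivalence relation on the indices.

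Next I would form the desired BC $\Omega'$ by merging: for each class of this index relation, take the union of the corresponding $C_i$. Each resulting block consists of mutually $\sim$-equivalent presus, and two distinct blocks contain no mutually $\sim$-equivalent presus (otherwise their indices would have been related and the blocks would already have been merged). Hence $\Omega'$ is exactly the restriction of $\sim$ to the presus appearing in $\Omega$, that is, a genuine BC for $L$. The number of blocks of $\Omega'$ equals the number of classes of the index relation, which is at most the number of classes of $\Omega$, giving the claimed bound on the index. The second assertion is then immediate: if $\Omega$ has finitely many classes, the bound forces $\Omega'$ to have finite index.

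The step I expect to carry the real content is the well-definedness of the merging, namely the equivalence between the ``some--some'' and ``every--every'' formulations of relatedness. Everything hinges on transitivity of $\sim$ and on the defining property of a pseudo BC that each class is $\sim$-homogeneous; once that is in place, the bound on the index is merely a count of blocks and the remainder is bookkeeping.
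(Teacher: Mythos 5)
Your proof is correct and follows exactly the route the paper takes: the paper treats this lemma as a direct consequence of the definitions, noting that a BC is obtained from a pseudo BC ``by joining those classes that contain presus that are equivalent to each other,'' which is precisely your merging construction via the equivalence relation $\sim$. Your write-up merely makes explicit the transitivity and well-definedness details that the paper leaves implicit.
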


We need the following technical lemma that describes an important behaviour of our automata.

\begin{lemma}\label{DETclaim}
By any complete %
 deterministic
   linear automaton $A$, every word $w$ is processed in a unique way and thus, there is exactly one presu $(u,v)$ with $w = uv$ such that $A$ reads $u$ by the first head and $v$ by the second head when %
   performing the computation on the input $w$.
\end{lemma}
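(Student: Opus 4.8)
The plan is to prove Lemma~\ref{DETclaim} by relying on the determinism together with the completeness of the automaton, which together guarantee that the computation on any input both exists and is unique. The statement really bundles two claims: first, that the computation on $w$ is well-defined and proceeds in a single unambiguous way; and second, that this unique computation induces a well-defined split $w = uv$ into the part read by the first head and the part read by the second head.

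First I would establish existence and uniqueness of the computation. Fix an input word $w$ and consider the computation starting from the configuration $(q_0, w)$. By Proposition~\ref{Compl-linDET}, at every state $q$ exactly one of the two heads has all its transitions defined, and for that head every input letter $a$ yields a unique successor $\delta(q,a,\lambda)\in Q$ (respectively $\delta(q,\lambda,a)\in Q$). Hence from any reachable configuration $(q, w')$ with $w'\neq\lambda$ there is exactly one applicable computation step: completeness rules out the automaton getting stuck (some step is always available as long as input remains), while determinism (Proposition~\ref{linDET1head}) rules out having two distinct applicable steps. I would argue by induction on the length of the remaining input $w'$ that the computation from $(q_0,w)$ is a uniquely determined sequence of configurations that strictly shortens the unread portion at each step, and therefore terminates after exactly $|w|$ steps in a configuration $(q,\lambda)$.

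Next I would extract the presu from this unique computation. At each of the $|w|$ steps exactly one head reads one letter, so the set of steps partitions into those taken by the first head and those taken by the second. The letters consumed by the first head, read left-to-right, form a prefix $u$ of $w$, and the letters consumed by the second head, read right-to-left, form a suffix $v$ of $w$; since every letter of $w$ is consumed exactly once and the two heads scan $w$ from opposite ends toward the meeting point, these two portions are contiguous, disjoint, and exhaust $w$, giving $w = uv$. Because the entire computation is uniquely determined, so is the assignment of each letter to a head, and hence the pair $(u,v)$ is unique.

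I do not expect a genuine obstacle here, as the result follows fairly directly from the completeness and determinism characterizations already in hand; the one point requiring a little care is justifying that the prefix-suffix split is well-defined, i.e.\ that the first head always reads a genuine prefix and the second head a genuine suffix with no overlap. This is ensured by the configuration mechanism of the model: a step of the first head peels a letter off the left end of the unread portion and a step of the second head peels one off the right end, so the unread portion remains a contiguous factor of $w$ throughout, and the cumulative left-consumed and right-consumed parts never cross. This is essentially the content of Lemma~\ref{lemIMPORTANT}, which I would cite to keep the argument short.
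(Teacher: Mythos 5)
Your proof is correct. The paper itself omits the proof of Lemma~\ref{DETclaim} (it is among the proofs dropped for the page limit), so there is no line-by-line comparison to make; your route --- induction on the length of the unread portion, using Proposition~\ref{Compl-linDET} to guarantee that at least one step is applicable from every configuration with input remaining and Proposition~\ref{linDET1head} to exclude a second one --- is the natural argument and surely the intended one. One point worth making fully explicit: the sequence of configurations alone does not always reveal which head moved (if the unread word is a single letter, or has equal first and last letters, peeling from the left and peeling from the right yield the same next configuration), so ``the computation is unique'' does not by itself give ``the head assignment is unique.'' What settles it is that, by determinism, only one head is enabled at each \emph{state}, so the head performing each step is determined by the current state; your argument contains this implicitly, since you derive uniqueness of the applicable step (including which head takes it) from the state-wise characterization. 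Finally, the closing appeal to Lemma~\ref{lemIMPORTANT} is unnecessary and slightly backwards: that lemma is best viewed as a consequence of the uniqueness being proved here, and the contiguity of the unread portion follows directly from the definition of a computation step, as you yourself already note.
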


Now, we are ready to state and prove one of our main results, the
characterization of 2detLIN languages  by finitely many %
equivalence classes of presus.

\begin{theorem}\label{thm:1}
A language $L$ is in 2detLIN if and only if
 there is a complete BC with a finite index for $L$ that does not contain any crossing pairs.
\end{theorem}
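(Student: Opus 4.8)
The plan is to prove the two implications separately, using a given automaton for the forward direction and a direct construction for the backward one. For the forward direction, assume $L\in$ 2detLIN and fix a complete deterministic linear automaton $A=(Q,T,q_0,\delta,F)$ with $L(A)=L$; this is legitimate by Proposition \ref{Compl-linDET} and the accompanying remark that every deterministic linear automaton can be completed. For each $w\in T^*$, Lemma \ref{DETclaim} supplies the unique presu $(u_w,v_w)$ with $u_w v_w=w$ that $A$ actually reads, together with the state $q_w$ in which the heads meet. I would place two assigned presus in the same class exactly when their computations finish in the same state. This yields a pseudo BC with at most $|Q|$ classes; it is complete because Lemma \ref{DETclaim} gives exactly one presu per word; and by Lemma \ref{lemIMPORTANT} two presus ending in the same state are equivalent with respect to $L$ (any middle $w'$ continues identically from the common state), so each class is a set of mutually equivalent presus. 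Finally Lemma \ref{lem:pseudoBC} turns this into a genuine BC of finite index, and since merging classes changes neither the set of contained presus nor the one-presu-per-word property, completeness and the absence of crossing pairs are inherited.

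The one nontrivial point in this direction, which I would isolate as the key step, is that no crossing pair occurs. Suppose for contradiction that $(u_1,v_1)$ and $(u_2,v_2)$ are both assigned presus with $u_1$ a proper prefix of $u_2$ and $v_2$ a proper suffix of $v_1$; write $u_2=u_1x$ and $v_1=yv_2$ with $x,y\neq\lambda$, and consider the single word $w^\star=u_1xyv_2$. Applying Lemma \ref{lemIMPORTANT} to the computation on $u_1v_1$ with middle $x$ shows the unique computation on $w^\star$ passes, after $|u_1|+|v_1|$ steps, through a configuration in which the first head has consumed $u_1$ and the second has consumed $v_1$; applying it to the computation on $u_2v_2$ with middle $y$ shows the same computation also passes through a configuration in which the first head has consumed $u_2$ and the second has consumed $v_2$. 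Both head positions advance monotonically along this one computation, so having strictly more consumed on the left forces the second snapshot to occur strictly later, while having strictly less consumed on the right forces it to occur strictly earlier; since $|x|,|y|>0$ these requirements are incompatible, and the contradiction rules out crossing pairs.

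For the backward direction I would, conversely, build a complete deterministic linear automaton from a complete crossing-free BC $\Omega$ of finite index. Completeness of $\Omega$ furnishes, for every input $w$, a \emph{unique} target split $(u,v)\in\Omega$ that the automaton must realize, and the accepting classes are exactly those $[(u,v)]$ with $uv\in L$ (well defined, taking the middle $\lambda$ in the equivalence). The states would be built from the finitely many classes of $\Omega$, and the crux is the reading discipline: at every reachable configuration the current partial read $(p,s)$ must determine, from its class alone, which single head advances next so that Proposition \ref{linDET1head} holds and the machine is deterministic, and to which class it then moves. The intended argument is that completeness together with the absence of crossing pairs makes this schedule well defined, no-crossing being precisely the monotonicity condition preventing two different words from demanding incompatible head assignments at a shared boundary position; consequently the left/right decisions read off from $\Omega$ never conflict, the heads meet exactly at the $\Omega$-split of $w$, and a Lemma \ref{lemIMPORTANT}-style induction shows the automaton accepts exactly $L$, with the finite index bounding the state set.

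I expect this backward construction to be the main obstacle, specifically the task of defining the transition function so that it is simultaneously deterministic, well defined on classes, and guaranteed to bring the heads together at the $\Omega$-prescribed split of every input. The subtlety is that the intermediate partial reads $(p,s)$ need not themselves lie in $\Omega$, and for genuinely linear languages (for instance even-length palindromes) the equivalence on \emph{all} partial reads can have infinite index; hence the reading discipline must keep the computation confined to configurations controlled by the finitely many classes of $\Omega$. Showing that completeness and no-crossing jointly enforce this confinement, and that the induced head schedule is globally consistent (never assigning one position of a word to both heads across different computations), is the technical heart of the argument, and it is exactly where the two structural hypotheses on the BC are indispensable.
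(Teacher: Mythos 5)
Your forward direction is correct and follows essentially the paper's own route: the classes are indexed by the state in which the heads meet, Lemma \ref{DETclaim} gives exactly one presu per word (completeness), Lemma \ref{lemIMPORTANT} together with determinism gives equivalence of presus ending in the same state, and Lemma \ref{lem:pseudoBC} converts the pseudo BC into a BC while preserving completeness and crossing-freeness. Your monotonicity argument for the absence of crossing pairs --- both snapshots lie on the single deterministic computation on $u_1xyv_2$, but the snapshot for $(u_2,v_2)$ would have to occur strictly later (more consumed on the left) and strictly earlier (less consumed on the right) than the one for $(u_1,v_1)$ --- is valid, and it in fact supplies a proof of the paper's Claim~2, which the paper states without proof.

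The backward direction, however, has a genuine gap, on two counts. First, you do not carry the construction out: the assertion that ``completeness together with the absence of crossing pairs makes this schedule well defined'' is precisely what must be proved, and you defer it as an expectation. The missing facts are the paper's Claims~3 and~4: (i) if $(u,v)$ is in the BC, then for each $a\in T$ the unique presu representing $uav$ is either $(ua,v)$ or $(u,av)$, because any other split of $uav$ forms a crossing pair with $(u,v)$; and (ii) which of the two cases occurs is independent of the letter, because if $(ua,v)$ and $(u,bv)$ were both in the BC they would themselves be a crossing pair. These are short arguments, but they are the entire content of the ``reading discipline''. Second, and more fundamentally, your design requirement that the next head be determined ``from its class alone'' cannot be met, so the construction as you specify it fails. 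By Claim~4 the head to move next is a function of the individual presu, not of its equivalence class: for the even-length palindromes over $\{a,b\}$, the complete crossing-free finite-index BC produced by your own forward direction from the paper's ten-state automaton for that language puts the equivalent presus $(\lambda,\lambda)$ and $(bb,bb)$ into one class (since $bbwbb\in L \Leftrightarrow w\in L$), yet that BC contains $(a,\lambda)$ and $(b,\lambda)$ (left-head continuations of the first presu) while the word $bbabb$ is represented by $(bb,abb)$, a right-head continuation of the second. Since the backward direction must work for an arbitrary valid BC, including this one, no single-state-per-class automaton with a class-uniform schedule can realize the prescribed splits. The paper resolves this by assigning \emph{two} states $q_i,p_i$ to each class $C_i$, one for presus continued by the first head and one for those continued by the second, so the automaton has up to $2n$ states rather than $n$ as your ``finite index bounding the state set'' suggests; Theorem \ref{thm:MN2} makes exactly this point. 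This splitting of classes by head type, together with Claims~3 and~4, is the key idea your proposal is missing.
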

\begin{proof}
 The proof is constructive in both directions.
 First, let us prove that for each language $L$ in 2detLIN, there is a complete BC with finite index as it is stated.

Let $A = (Q,T,q_0,\delta,F)$ be a completely defined deterministic linear automaton accepting $L$ with the set of states $Q=\{q_0,\dots,q_n\}$.
 Based on $A$, we construct a complete pseudo BC. Basically, the construction follows
   Algorithm 1 that is described below. 

\smallskip

\noindent \textbf{Algorithm 1.} \\
\textbf{Input}:  $A = (\{q_0,\dots,q_n\},T,q_0,\delta,F)$, a complete deterministic linear automaton. %
 \\
\textbf{Output}: a pseudo BC for the language accepted by $A$. \\
 Put $(\lambda,\lambda)$ representing the empty word into class $C_0$. \\
 Let the set $J$ of states initially contain only $q_0$ and let the set $J'$ be empty. \\
 While (True) do

 \ \ \ For each $i$  in $\{0, \dots ,n\}$ do

 \qquad   If ($q_i\in J$)

 \qquad \quad   For each $a\in \Sigma$ do

 \qquad \qquad     If ($\delta(q_i,a,\lambda)=q_j$)

 \qquad \qquad \quad   Put $q_j$ into %
 the set $J'$

  \qquad \qquad \quad  For each $(u,v)\in C_i$ do

  \qquad \qquad \qquad  \ \    Put  the presu $((ua,v))$ into $C_j$

 \qquad \qquad     If ($\delta(q_i,\lambda,a)=q_j$)

 \qquad \qquad \quad   Put $q_j$ into %
 the set $J'$

 \qquad \qquad \quad   For each $(u,v)\in C_i$ do

 \qquad \qquad \qquad  \ \          Put the presu $((u,av))$ into $C_j$

  \ \ \ Let $J = J'$ and $J'$ be empty. %

\medskip

 Note that as we have infinitely many presus, the algorithm is running for the infinity, however,
  it puts
   the presus in the appropriate classes by their increasing values of the sum of the lengths of prefix and suffix in a pair. %
    The algorithm works in a somewhat similar manner as a breadth-first search algorithm build an infinite tree level by level.
 Thus,
 for each presu it will be clear after a finitely many steps %
 where it belongs if it appears in the pseudo BC (as we claim it later).

 Clearly the set $J$ contains always a subset of $Q$. %
 It is clear that in the beginning this subset contains only $q_0$. In each iteration of the while loop the new presus appear in the classes that have their sum of the length of prefix and suffix that is one more as similar values of the presus in the previous iteration.
For us, at this moment, the only important is that we can decide which pair appears in the constructed pseudo BC. Moreover, if it appears in it, then we can also decide in which class it is.
See, Claim 1. %

\smallskip

\noindent \textbf{Claim 1.}  The classes obtained by Algorithm~1 form a pseudo BC for the language $L$ accepted by the given complete deterministic linear automaton $A$.

Further, for each presu $(u,v)$ it is clear if it appears in the created pseudo BC, and if so, it is clear  where, into which class $C_j$ it belongs.
Moreover,  the induced pseudo BC is complete. %

\smallskip

\noindent By continuing the proof of the theorem,  it is already clear that the induced pseudo BC is complete.
What is left to be shown is that this pseudo BC does not contain any crossing pairs.

\smallskip

\noindent \textbf{Claim 2.}
For any complete %
deterministic
  linear automaton $A$, the obtained pseudo BC does not contain any crossing pairs.

\smallskip

By the construction, as we have seen, we obtained a pseudo BC,
 if two presus are in the same class $C_i$ then they must be equivalent.
We have proven that there are finitely many classes $C_i$; further the contained presus imply a complete pseudo BC without crossing pairs.
 This, by Lemma \ref{lem:pseudoBC} also proves that there is a complete BC with finite index without crossing pairs, since by joining some classes of the pseudo BC, its completeness and crossing-freeness properties are not changing.
 Thus, the first part of the proof has been finished.

 (We note here that in a pseudo BC some of the sets $C_i$ may contain presus that are equivalent to each other. This property is somewhat similar that a deterministic finite automaton that is not minimal has some states that represent prefix words belonging to the same Myhill-Nerode class.)

\medskip

Now, we prove the other direction. Thus, let us assume that for a language $L$, a complete BC $\Omega$ is given without crossing pairs, then
we define a deterministic linear automaton that accepts $L$ (matching with $\Omega$), %
 and thus the language %
 that is %
  characterized by $\Omega$ %
  is %
  a 2detLIN language. %
Thus, let finitely many equivalence classes $C_1, \dots, C_n $ of presus be given in $\Omega$, %
our aim is to construct a deterministic linear automaton $A = (Q,T,q_0,\delta, F)$ based on that. As the BC is defined for a language, the alphabet $T$ is fixed, %
and it will be used for $A$.
Further, we assign two states $q_i$ and $p_i$ for each class $C_i$. As the given BC is complete, it contains a presu representing the empty word, and it must be $(\lambda,\lambda)$. Let the initial state be one of the states that represents the class $C_i$ which contains $(\lambda,\lambda)$. However, to know which of those, first, we need some technical arguments.

Since no crossing pairs occur in the BC $\Omega$ and it is complete, we can deduce the following statements.

\smallskip

\noindent\textbf{Claim 3.} Let a complete BC for a language be given without crossing pairs.
If $(u,v) \in C_i$ corresponds to the word $uv$ in the BC, then for each $a\in T$ either $(ua,v)$ or $(u,av)$ corresponds to $u a v$. %

\smallskip
When $(u,v) \in C_i$ corresponds to the word $uv$ in a complete BC, then we may also say that $(u,v)$ represents the word $uv$.

\smallskip

\noindent\textbf{Claim 4.}
Let a complete BC for a language be given without crossing pairs.
If $(u,v)$ is in the BC, then either $(ua,v)$ is in the BC for all $a\in T$ or   $(u,av)$ is in the BC for all $a\in T$. %

\smallskip

Based on Claims
 3 and 4, we can now continue our construction.
Applying Claim 4, for the pair $(\lambda,\lambda) \in C_i$, either $(a,\lambda)$ is in $\Omega$ %
 or $(\lambda,a)$.
In the former case, let $q_i$ be the initial state; in the latter case, let $p_i$ be the initial state.
(This is independent of which element $a\in T$ is considered).

Generally, the equivalence classes of the BC $\Omega$ are partitioned into two sets, one containing all presus $(u, v)$ such that $(ua, v)$ is in some $C_j$ in $\Omega$, while the other one contains
the presu $(u, v)$ for $(u, av)$ is in some $C_j$. We label the first mentioned set by the state $q_i$
for the equivalence class $C_i$ and the second one by the state $p_i$. We define
the transition function of the automaton $A$ such that $A$ ends up in state $q_i$
or $p_i$, respectively, if it reads a word $uv$ for which $(u, v)$ is in the according
equivalence class.
 Thus,
for each presu $(u,v)$ and for each letter $a\in T$, let us consider the word $u a v$. As the BC is complete, it appears in the BC represented by exactly one presu, and either the left or the right head reads the last letter $a$ between $u$ and $v$, i.e.,
either $(ua,v)$ or $(u,av)$ appears in the BC, respectively.
 However, it may happen that there are two equivalent  presus $(u_1,v_1)$ and $(u_2,v_2)$ in a class $C_j$ such that for $(u_1,v_1)$ the first, but for
 $(u_2,v_2)$ the second head will make the next read (we show such example later). Therefore, automaton $A$ will be in state $q_j$ after processing words represented by the first type presus, and in $p_j$ after processing words represented by the second type presus.

Formally,
  for each state $q_i$ and for each letter $a\in T$, we define either
\begin{itemize}
\item
 the transition $\delta(q_i,a,\lambda) = q_j $ if there is a presu $(u,v) \in C_i$ such that
 $(ua,v) \in C_j$ and $(uaa,v)$ appears in the BC; or
\item
  $\delta(q_i,a,\lambda) = p_j $ if there is a presu $(u,v) \in C_i$ such that $(ua,v) \in C_j$ and $(ua,av)$
 appears in the BC.
  \end{itemize}

Further,   for each state $p_i$ and for each letter $a\in T$, we define either
 \begin{itemize}
  \item
the transition  $\delta(p_i,\lambda,a) = q_j $ if $(u,v) \in C_i$ and $(u,av) \in C_j$ and  $(ua,av)$
appears in $ \Omega$; or %
\item
 the transition  $\delta(p_i,\lambda,a) = p_j $ if $(u,v) \in C_i$ and $(u,av) \in C_j$ and $(u,aav)$
appears in $\Omega$. %
\end{itemize}
 Clearly,  for each state and letter,
 exactly one of the above transitions will be defined for $A$ based on %
 the properties shown in the previous Claims.

Thus, we can deduce that the transition function determines a complete deterministic linear automaton.

Only one thing is left to define: the set of accepting states $F$. This is based, actually, not on the BC itself, but on some property used to define $\Omega$. In %
$\Omega$, the equivalence classes are defined based on how the possible middle part of the input (i.e., the part we put between the prefix and suffix of the presu) behaves, i.e., with which middle part the input will belong to the language. Now, let
$F =  \{q_i,p_i~|~ C_i \textnormal{  contains  presus  } (u,v)  \textnormal{ such that } uv \in L \}$.

Based on the construction, it can be seen that $A$ accepts the language $L$.
\qed \end{proof}

By the first half of the proof, we are sure that the number of classes in a BC for a 2detLIN language $L$ %
 is not more than %
 the number of states of a complete deterministic linear automaton that accepts $L$.
However, we have seen (by the other direction of the proof) that there could be a BC such that it may require a larger (at most twice much) number of states in an accepting linear automaton. %

We show some examples.
Our first example is very characteristic: %
 the languages of palindromes are in 2detLIN (for any alphabet), but not deterministic linear as
 for alphabets which are at least binary,
 there is no deterministic one-turn pushdown automata accepting them. %
  In fact these languages are 1-rated, i.e., even linear.
\begin{example}
Let us consider the alphabet $T=\{a,b,c\}$. The table of an automaton $A$ that accepts the language of palindromes (the language containing a word $w$ if and only if its reversal $w^R$ is the same as itself) over $T$ is given below in a form of a Cayley table: \\
\begin{center}
\begin{tabular}{l|lllll}
  \hline
  $T \setminus Q$ & $q_0$ (left)& $q_1$ (right) & $q_2$ (right)& $q_3$ (right)& $q_4$ (left)\\ \hline
  $a$ & $q_1$ & $q_0$ & $q_4$ & $q_4$ & $q_4$ \\
  $b$ & $q_2$ & $q_4$ & $q_0$ & $q_4$ & $q_4$ \\
  $c$ & $q_3$ & $q_4$ & $q_4$ & $q_0$ & $q_4$ \\
  \hline
\end{tabular}
\end{center}
Further $q_0$ is the initial state, and $q_0,q_1,q_2,q_3$ are the accepting states, while $q_4$ is, in fact, the sink state. After the name of each state, it is indicated which of the heads can move in transitions from that state.

The equivalence classes of presus based on this automaton are:
\begin{itemize}
\item
 $C_0$: $\{(u,u^R)~|~u\in T^*\}$, \ \ \quad
\item
$C_1$: $\{(ua,u^R)~|~u\in T^*\}$, \qquad
\item
$C_2$: $\{(ub,u^R)~|~u\in T^*\}$,
\item
$C_3$: $\{(uc,u^R)~|~u\in T^*\}$, \quad
\item
$C_4$: $\{(uev,fu^R)~|~u,v\in T^*, e,f\in T, e\ne f\}$.
\end{itemize}
Clearly $(\lambda,\lambda) \in C_0$ and, by applying Algorithm 1, the set $J$ contains only $C_0$. Since the first head can read in $q_0$, the pairs $(a,\lambda)$, $(b,\lambda)$ and $(c,\lambda)$ are created and they are put to classes $C_1,C_2$ and $C_3$, respectively. Then, the new set $J$ contain $C_1,C_2$ and $C_3$. In the next iteration, taking $C_1$ first, some new presus appear in the BC:
$(a,a) \in C_0$, $(a,b), (a,c)\in C_4$ and both $C_0$ and $C_4$ are appended to $J'$. Then $C_2$ and $C_3$ are considered in a similar manner to put some new presus into some classes. Then, updating the set $J$ of states, a new iteration comes.
It can be seen that following the algorithm, the above classes are obtained.
\end{example}

Our next example, is a non-regular, fix-rated linear language over the binary alphabet which can be accepted both by deterministic linear automata and deterministic one-turn pushdown automata.
\begin{example} Now, let us consider the language
$L = \{1^n 0^{3n}~|~n\in\mathbb N\}$, this language is in fact a 3-rated linear language and it is both in detLIN and 2detLIN.

Let us consider the following BC for $L$ (on the left).
\begin{itemize}
\item $C_1$: $\{(1^n,0^{3n})~|~n\in \mathbb N\}$,
\item $C_2$: $\{(1^{n+1},0^{3n})~|~n\in \mathbb N\}$,
\item $C_3$: $\{(1^{n+1},0^{3n+1})~|~n\in \mathbb N\}$,
\item $C_4$: $\{(1^{n+1},0^{3n+2})~|~n\in \mathbb N\}$,
\item $C_5$: $\{(1^{n},0^{3n+1})~|~n\in \mathbb N\}$,
\item $C_6$: $\{(0 w, \lambda) ~|~ w\in \{0,1\}^*\} \cup %
\{(1 w, 1)~|~ w\in \{0,1\}^*\} \cup
\{(1^{m+1} w, 1 0^{3m+1})~|~ m\in \mathbb N, w\in \{0,1\}^*\} \cup
\{(1^{m+1} w, 1 0^{3m+2})~|~ m\in \mathbb N, w\in \{0,1\}^*\} \cup$
$\{(1^{m} w, 1 0^{3m})~|~ m\in \mathbb N, m>0, w\in \{0,1\}^*\} \cup \\ $
$\{(1^{m} 0 w,  0^{3m+1})~|~ m\in \mathbb N, m>0, w\in \{0,1\}^*\}$.
\end{itemize}
It is easy to see that the sets $C_i$ are pairwise disjoint, moreover, the BC is complete as it contains a pair for every word (actually, $C_6$ guarantees this fact).
Based on that we may have the %
 complete deterministic linear automaton $A$ accepting $L$ (see the table below). 
\begin{center}
\begin{tabular}{l|llllll|llllll|}
  \hline
$T \backslash Q$ &$q_1$ &$q_2$ & $q_3$&$q_4$&$q_5$ &$q_6$ &$p_1$ &$p_2$ & $p_3$&$p_4$ &$p_5$ &$p_6$\\ \hline
$0$    & $q_6$ & $-$  & $-$  &$-$&  $q_6$ &$q_6$ &$q_5$ & $p_3$&$p_4$ &$p_1$ &$-$  & $-$ \\
$1$    & $p_2$ & $-$  & $-$  &$-$  &$p_3$ &$q_6$ &$q_6$ & $q_6$&$q_6$ &$q_6$ &$-$  & $-$ \\
  \hline
\end{tabular}
\end{center}
For each state $q_i$ the first, for each state $p_i$ the second head can read the input in the next step.
Further, as $(\lambda,\lambda)\in C_1$ and, e.g., $(1,\lambda)$ in the BC, $q_1$ is the initial state. \\
The final states are $q_1, p_1$ as only class $C_1$ contains presus representing words of $L$.
 Observe that, in fact, the states $q_2,q_3,q_4,p_5,p_6$ are not reachable from $q_0$, thus one may simply erase them from the automaton. %
  Thus, in fact the obtained linear automaton has 7 states (it is complete and deterministic). Observe that class $C_6$ contains the presus that cannot be continued by inserting a word to the middle to get a word of language $L$. Some of the words belonging to these presus are clearly representing something outside of the language, as for instance every word starting with a $0$ is in $\{(0 w, \lambda) ~|~ w\in \{0,1\}^*\}$, or every word ending with a $1$ is either in the above set or in $\{(1 w, 1)~|~ w\in \{0,1\}^*\} $. On the other hand, the presu $(110,0000000)$ is in the set $\{(1^{m} 0 w,  0^{3m+1})~|~ m\in \mathbb N, m>0, w\in \{0,1\}^*\}$, thus it also belongs to $C_6$ even if it represents the word $11 0^8 $, however, ``it was read not in a correct way'' by the heads, thus no continuation of the computation reading it will be accepting.
\end{example}
Neither the automaton nor the characterization by BC, in the previous example, are the simplest one for $L$, however, our aim is to show that our theory works also if not the most efficient description is given if it meets the requirements (e.g., finiteness, completeness). Actually, in the example there are both types of presus in class $C_1$, thus both the states $q_1$ and $p_1$ are required to be in the automaton.

In the next example we highlight the property that a complete deterministic linear automaton may have states for the same class of presus with different head movements.
\begin{example}
Let the language $L$ of the even-length palindromes over $\{a,b\}$ be considered. The following automaton accepts it:
\begin{center}
\begin{tabular}{l|llllllllll}
  \hline
  $T \setminus Q$ & $q_1$ & $p_1$ & $p_2$ & $q_3$& $p_3$ & $p_4$ & $p_5$ & $q_6$ & $q_7$ & $q_8$\\ \hline
  $a$ &             $p_2$ & $q_8$ & $q_1$ & $p_4$ & $q_7$ & $q_7$ & $q_7$ & $q_7$ & $q_7$ & $p_1$ \\
  $b$ &             $q_3$ & $q_6$ & $q_7$ & $p_5$ & $p_1$ & $p_2$ & $p_3$ & $p_1$ & $q_7$ & $q_7$ \\
  \hline
\end{tabular}
\end{center}
where the initial state is $q_1$ and the accepting states are $q_1, p_1$ and $p_5$. For each state $q_i$ the first, for each state $p_i$ the second head can read a letter from the input.

The corresponding classes of presus are belonging to the following languages, i.e., for each class $C_i$, any of the words of $L_i$ can be put into the middle to have a word in $L$.
\begin{enumerate}
\item $C_1$ for states $q_1$ and $p_1$: $L_1 = \{w~|~w \text{ is an even-length palindrome}\}=L$.
\item $C_2$ for state $p_2$: \qquad \quad \ $L_2 = L \cdot\{a\}$.
\item $C_3$ for states $q_3$ and $p_3$: $L_3 = L \cdot \{b\}$.
\item $C_4$ for state $p_4$: \qquad \quad \ $L_4 = L \cdot \{ab\}$.
\item $C_5$ for state $p_5$: \qquad \quad \ $L_5 = L\cdot\{bb\} \cup \{\lambda\}$.
\item $C_6$ for state $q_6$: \qquad \quad \ $L_6 = \{b\} \cdot L$.
\item $C_7$ for state $q_7$: \qquad \quad \ $L_7 = \{\}$, there is no way to make it acceptable.
\item $C_8$ for state $q_8$: \qquad \quad \ $L_8 = \{a\} \cdot L$.
 \end{enumerate}
\end{example}

Finally, we may also use our result to show that a language is not in 2detLIN as we present in the next example.

\begin{example}
Let us consider the language  $L=\{a^nb^nc^n~|~n\in \mathbb N\}$.
We show that $L$ is not a 2detLIN language by contradiction. Thus, let us assume that we have a complete BC without crossing pairs with a finite index for $L$.
 Let the number of equivalence classes be $i$. Further, let us assume that there is a deterministic linear automaton $A$ that accepts $L$ (based on the BC given above).

  There are words in the language with arbitrarily long prefix from $a^*$ and arbitrarily long suffix from $c^*$. Thus, let us consider presus
in the form $(a^m,c^k)$. We show that not any two  different presus in this form can be in the same class. Let   $(a^m,c^k)$ and $(a^j,c^\ell)$ two different presus.
Let us use the notation $max_1=\max\{m,k\}$ %
and let $w = a^{max_1 - m} b^{max_1} c^{max_1-k}$.
Since the two presus are not the same at least one of $m\ne j$ and $k\ne \ell$ holds. Then,
\begin{itemize}
\item on the one hand, presu $(a^m,c^k)$ with the word $w$ results \\ $a^m  a^{max_1 - m} b^{max_1} c^{max_1-k} c^k = 
    a^{max_1} b^{max_1} c^{max_1} \in L$, but
\item on the other hand,  presu $(a^j,c^\ell)$ with the word $w$ results \\  $a^j  a^{max_1 - m} b^{max_1} c^{max_1-k} c^\ell = %
    a^{max_1+j-m} b^{max_1} c^{max_1+\ell -k} $.
   However, in either case, this word is not in $L$, since in the first case, the number of $a$-s does not match with the number of $b$-s, and in the second case, the number of $c$-s does not match with the number of $b$-s.
    \end{itemize}

Considering %
 the word $a^{2i} b^{2i} c^{2i} \in L$, the complete BC must contain at least $2i$ presus of the form $(a^m,c^k)$ that belong to the first $2i$ steps of an accepting computation of this word by the deterministic linear automaton $A$. However, each of these presus must be in a unique class which contradicts to the fact that there are only $i$ classes.
\end{example}

\section{Discussion}
Now, let us discuss what can we gain and what we cannot gain by such characterizations.
For the regular languages, the Myhill-Nerode characterization is closely related to the minimal deterministic finite automaton accepting the language, as we have recalled.
Moreover, as this minimal automaton is unique (up to renaming the states), it also allows to identify a language.

The case of 2detLIN is different, we may have various orders/ways to consume the prefix and the suffix of the input.
 However, we have some strong analogies.
As for the original Myhill-Nerode theorem, an automaton accepting the considered language $L$ may have computations that equivalent words lead the automaton to the same state. Based on the (second half of) the proof of Theorem \ref{thm:1}, we state the following analogous result for 2detLIN languages in the form of a theorem. %

\begin{theorem}\label{thm:MN2}
The BC characterization of a 2detLIN language $L$ allows us to have a deterministic linear automaton $A$ accepting $L$ such that there are at most two states for each equivalent set of presus.  Moreover,
in the computations of any two equivalent presus, after processing these prefix and suffix pairs, $A$ is in one of these two states 
 (let us denote them by $q_i$ and $p_i$ for class $C_i$). If $A$ has both of them, %
  then in one of them the first, in the other the second head can move.
 If a presu $(u_1,v_1)$ is in the %
  class $C_i$, then  each input having the prefix-suffix pair $u_1,v_1$ is processed by $A$ through 
   one of the states $q_i$ or $p_i$: either
  $(q_0, u_1 w v_1) \Rightarrow^* (q_i,w)$ for all $w\in T^*$ or $(q_0, u_1 w v_1) \Rightarrow^* (p_i,w)$ for all $w\in T^*$. If $A$ has both $p_i$ and $q_i$, %
   then there is also a presu $(u_2,v_2)$ in $C_i$, such that
$(q_0, u_2 w v_2) \Rightarrow^* (r,w)$ for all $w\in T^*$, where $r \in \{q_i,p_i\}$, but it differs from the state used for presu
$(u_1,v_1)$.
\end{theorem}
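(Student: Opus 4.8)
The plan is to read off everything from the construction in the second half of the proof of Theorem \ref{thm:1}, since this theorem is essentially a restatement of the structural properties of the automaton $A$ that was built there from the complete crossing-free BC $\Omega$. First I would recall that in that construction we assigned to each equivalence class $C_i$ two candidate states, $q_i$ (a ``first-head'' state) and $p_i$ (a ``second-head'' state), and that the transition function was defined so that whenever a word $uv$ is processed, the automaton lands in $q_i$ if $(u,v)\in C_i$ and the next letter read for the word $uav$ is read by the first head (i.e.\ $(ua,v)$ appears in $\Omega$), and in $p_i$ otherwise. This already gives the bound of at most two states per class; I would state this as the first step.

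\textbf{Step 1 (the two-state bound).} For a fixed class $C_i$, the only states of $A$ that the construction can associate with presus of $C_i$ are $q_i$ and $p_i$, so at most two states correspond to each class. If only one of them is actually used, then exactly one head can move from it; if both are present, then by definition $q_i$ is the state from which the first head moves and $p_i$ the one from which the second head moves, establishing the ``first head / second head'' dichotomy claimed.

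\textbf{Step 2 (equivalent presus reach a state of the correct class).} Here I would invoke Lemma \ref{lemIMPORTANT} together with Lemma \ref{DETclaim}. For a presu $(u_1,v_1)\in C_i$, every input of the form $u_1 w v_1$ is processed by $A$ so that its first $|u_1 v_1|$ steps read exactly $u_1$ by the first head and $v_1$ by the second head, reaching the state the construction assigned to $(u_1,v_1)$ and leaving $w$ unread; by Lemma \ref{lemIMPORTANT} this prefix-of-computation is independent of $w$, so we get either $(q_0,u_1 w v_1)\Rightarrow^*(q_i,w)$ for all $w$ or $(q_0,u_1 w v_1)\Rightarrow^*(p_i,w)$ for all $w$, with the state determined solely by which head reads the letter following $(u_1,v_1)$.

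\textbf{Step 3 (both states occur only when the class genuinely mixes head movements).} The substantive claim is the last one: if $A$ has both $q_i$ and $p_i$, then $C_i$ must contain presus of both types. This is immediate from the definition of the construction, since $q_i$ is created only because some presu in $C_i$ is continued by the first head and $p_i$ only because some presu in $C_i$ is continued by the second head; thus there exist $(u_1,v_1)$ and $(u_2,v_2)$ in $C_i$ leading to the two different states, exactly as asserted, with $r\in\{q_i,p_i\}$ the state not used for $(u_1,v_1)$. The main obstacle, and the place where care is needed, is Step 2: one must be sure that the construction's case distinction (``$(uaa,v)$ appears'' versus ``$(ua,av)$ appears,'' and the symmetric pair for $p_i$) really does send every extension $u_1 w v_1$ through the intended state regardless of $w$. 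This is where completeness and crossing-freeness of $\Omega$ are used, via Claims 3 and 4, to guarantee that the head-movement decision depends only on the presu $(u_1,v_1)$ and not on the particular $w$ inserted in the middle, so that Lemma \ref{lemIMPORTANT} can be applied uniformly.
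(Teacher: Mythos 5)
Your proposal is correct and follows essentially the same route as the paper: the paper gives no separate proof of Theorem~\ref{thm:MN2}, stating explicitly that it is read off from the construction in the second half of the proof of Theorem~\ref{thm:1}, which is exactly what you do (two states $q_i,p_i$ per class, transitions from $q_i$ moving the first head and from $p_i$ the second, landing state determined by the presu's type). Your identification of Lemma~\ref{lemIMPORTANT}, Lemma~\ref{DETclaim}, and Claims~3 and~4 as the ingredients making the ``for all $w\in T^*$'' claim work matches, and indeed slightly exceeds, the level of justification the paper itself provides.
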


Let us discuss, now, cases where we may have a similarly powerful characterization as the original Myhill-Nerode result for the regular languages. It is proven in \cite{Nagy2013} that all $k$-rated
linear languages for all nonnegative rational values of $k$ are in 2detLIN. More precisely, it is shown that the set of fix-rated linear languages is a proper subset of 2detLIN.

\begin{theorem}\label{k-ratedTHM}
Let us consider a $k$-rated linear language $L$ with $k=\frac{m}{n}$ %
with co-primes $m$ and $n$. Then $L$ has a complete (pseudo) BC without crossing pairs such that
for all presus in the class ``always the same head is stepping'' in a corresponding automaton. More precisely,
if $(u_1,v_1)$ and $(u_2,v_2)$ are both in the class $C_i$, then either both $(u_1a,v_1)$ and $(u_2a,v_2)$ are in the BC, and they are in the same class $C_a$ for each $a\in T$, respectively; or
both $(u_1,av_1)$ and $(u_2,av_2)$ are in the BC, and they are in the same class $C'_a$ for each $a\in T$, respectively.
Moreover, the corresponding complete deterministic linear automaton reads every input with an alternating usage of the heads
 as follows: \\
Till the whole input is processed, %
\begin{itemize}
\item %
 it reads a letter by the first head from the left of the input in $n$ computation steps, then
\item
 in the next $m$ computation steps, it reads the input by the second head from the right.
\end{itemize}
When the last letter is read by a head (depending on the length of the original input), the computation finishes and the acceptance is decided.
\end{theorem}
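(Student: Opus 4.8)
The plan is to reduce the statement to the machinery already built for Theorem \ref{thm:1}: I will exhibit a complete deterministic linear automaton for $L$ whose scanning order is rigidly positional, and then read off the BC from it through Algorithm 1. First I would fix a $k$-rated linear grammar $G=(V,T,S,P)$ generating $L$, writing $V$ for its nonterminal alphabet. If each of its nonterminal-producing rules has the shape $A\to u B v$ with $|u|=n'$ and $|v|=m'$, then $m'/n'=k=m/n$; since $m$ and $n$ are coprime there is a positive integer $d$ with $n'=dn$ and $m'=dm$, so the prefix and suffix produced after $p$ derivation steps have lengths $pdn$ and $pdm$. This is the structural fact that lets the two heads scan the input in the canonical finest phase of $n$ first-head steps followed by $m$ second-head steps.

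Next I would build the automaton $A$ by a subset construction simulating derivations of $G$ from the outermost symbols inward, matching the direction in which the heads move. A state of $A$ records the set of nonterminals $B$ still consistent with the prefix $u$ and suffix $v$ read so far (those with $S\Rightarrow^* uBv$), together with the bounded bookkeeping needed inside a phase: how many of the $n$ left- and $m$ right-letters of the current chunk have been consumed and which partially matched productions stay live. The head used at step $t$ is forced by $t \bmod (n+m)$ --- the first head on steps $1,\dots,n$ of a phase and the second head on the next $m$ steps --- so by Proposition \ref{linDET1head} the automaton is deterministic; after adjoining a cycle of non-accepting sink states that continue the same schedule it is complete by Proposition \ref{Compl-linDET}, and hence on every input it scans by the fixed alternating pattern until the heads meet. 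This is exactly the ``Moreover'' part of the statement.

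With $A$ in hand I would apply Algorithm 1 from the proof of Theorem \ref{thm:1}, taking the classes $C_i$ to be indexed by the finitely many reachable states of $A$. By Claim 1 the result is a complete pseudo BC for $L$, by Claim 2 it has no crossing pairs, and its index is finite because $A$ has finitely many states. The ``same head is stepping'' property is then immediate: every presu in $C_i$ drives $A$ to the single state $q_i$, which by Proposition \ref{linDET1head} reads with exactly one head, so for any $(u_1,v_1),(u_2,v_2)\in C_i$ the next letter $a$ is consumed by the same head for both; and by determinism that letter leads from $q_i$ to a unique successor state, so both presus fall into one and the same class, namely the asserted $C_a$ (when the first head moved) or $C'_a$ (when the second head moved).

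I expect the main obstacle to lie in the correctness of the terminal-production handling where the heads meet. The nonterminal-producing rules fit the rigid phase pattern exactly, but the terminal productions $A\to z$ carry the various bounded lengths allowed by $G$, so a derivation may be completed at different alignments relative to the $n+m$ phase boundaries, and the final letters $z$ are themselves consumed by the same alternating pattern rather than left as an unread block. The delicate point is to show that $A$ can, in bounded state, track every way a live nonterminal might finish through a terminal production and accept precisely when the scanned word equals some $uzv$ with $S\Rightarrow^* uAv\Rightarrow uzv$, all while respecting the forced head schedule; once this is checked, acceptance coincides with membership in $L$ and the BC properties above follow formally. Alternatively one may simply invoke the deterministic linear automaton of \cite{Nagy2013} witnessing that $L$ is in 2detLIN, observe that its scanning order is precisely this alternating pattern, and then carry out only the last two paragraphs.
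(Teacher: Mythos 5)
The paper itself contains no proof of Theorem \ref{k-ratedTHM} --- it is one of the proofs omitted for the page limit --- so there is no in-paper argument to compare yours against; I can only assess it on its merits and against the route the paper implicitly points to. Your overall strategy is the natural and almost certainly the intended one: exhibit a complete deterministic linear automaton for $L$ whose head schedule is rigidly positional ($n$ first-head steps, then $m$ second-head steps, repeating, enforced even in the sink states), and then feed it to Algorithm 1. Your final paragraph-three argument is correct and is the real content beyond what \cite{Nagy2013} already gives: by Claims 1 and 2 in the proof of Theorem \ref{thm:1} the produced classes form a complete pseudo BC for the accepted language with no crossing pairs and finite index; every presu in a class $C_i$ drives the automaton to the single state $q_i$; by Proposition \ref{linDET1head} only one head can move at $q_i$; and by determinism and completeness the letter $a$ sends all presus of $C_i$ to one successor state, i.e., to one class $C_a$ (resp.\ $C'_a$), which is exactly the ``same head is stepping'' clause. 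It is also right that you stop at a \emph{pseudo} BC: merging equivalent classes into a genuine BC could fuse a first-head class with a second-head class and destroy that property, and the theorem statement deliberately allows the pseudo variant.

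The weak point is the one you flag yourself: in the direct grammar-to-automaton construction, the treatment of terminal productions $A\to z$ under the forced schedule is a sketch, not a proof, and without it Claim 1 only certifies a pseudo BC for $L(A)$, not for $L$. Your outline does go through --- since terminal right-hand sides have bounded length, only boundedly many recent phase boundaries can still complete, so the automaton can keep, for each such boundary, each live nonterminal and each candidate production, the match status of the middle letters read so far against a prefix and a suffix of $z$, and accept exactly when the heads meet on a fully matched candidate --- but as written this is a deferred obligation, and it is precisely where acceptance coinciding with membership in $L$ must be established. Your fallback is the clean way to discharge it: invoke the deterministic automaton of \cite{Nagy2013} (whose scanning discipline is exactly this $n$-left/$m$-right alternation, which is why coprimality of $m$ and $n$ enters), and then run only your last two paragraphs. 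That is also how the paper positions the theorem, citing \cite{Nagy2013} for 2detLIN membership of fix-rated linear languages immediately before stating it; with that substitution your proof is complete and correct.
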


We \textbf{conjecture} that the minimal automaton (with the parameter $k$) can be defined and determined such
that it has the minimal number of states among the complete deterministic linear automata accepting $L$ and having the above fixed property about the order of the head steps.  Further, this minimal automaton can be used as a unique representant of the given $k$-rated linear language, and thus, also
language equality of these languages can be decided in these classes similarly, as by the original Myhill-Nerode theorem language equivalence of regular languages can be decided.

It is important to use co-primes $m$ and $n$, otherwise the characterization gives a larger number of classes and states. Moreover, the characterization depends on the value of $k$. %
As every regular language is %
$k$-rated with any positive rational value of $k$ (see, e.g., \cite{linPUMP,Semenov}),
this result could give also several alternative characterizations for regular languages.

\begin{corollary}\label{corr}
As for a special subclass, for the regular languages as 0-rated linear languages, exactly the original Myhill-Nerode characterization comes as a special case of our main theorem (Theorem \ref{thm:1}) with Theorem \ref{k-ratedTHM}.
\end{corollary}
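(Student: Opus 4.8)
The plan is to specialise Theorem~\ref{k-ratedTHM} to the degenerate value $k = 0$ and to observe that the resulting border classification collapses exactly onto the classical Myhill-Nerode equivalence. First I would fix the coprime representation: writing $k = \frac{m}{n} = 0$ forces $m = 0$ and $n = 1$, and since $\gcd(0,1) = 1$, the form $0 = \frac{0}{1}$ is the required coprime one. Recall from the preliminaries that the $0$-rated linear grammars are precisely the regular grammars, so the class of $0$-rated linear languages is exactly the class of regular languages, which is the class this corollary concerns.

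Next I would feed $m = 0$, $n = 1$ into the head-movement description of Theorem~\ref{k-ratedTHM}. The corresponding automaton reads $n = 1$ letter by the first (left) head, then $m = 0$ letters by the second (right) head, and repeats this pattern. Since $m = 0$, the right head never makes a single step during any computation: the whole input is consumed from the left. Consequently every presu produced in the associated complete, crossing-free border classification has empty suffix, i.e.\ each class consists only of pairs of the form $(u,\lambda)$, and a word $w$ is represented by the presu $(w,\lambda)$. By Proposition~\ref{Compl-linDET}, the synthesised complete deterministic linear automaton uses only the first head at every state, hence it \emph{is} an ordinary complete deterministic finite automaton once the always-$\lambda$ second component is projected out.

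With the suffix component pinned to $\lambda$, I would then check that each defining condition of Theorem~\ref{thm:1} degenerates to its classical counterpart. The presu-equivalence of $(u_1,\lambda)$ and $(u_2,\lambda)$ unfolds to $u_1 w \in L \Leftrightarrow u_2 w \in L$ for all $w\in T^*$, which is verbatim the Myhill-Nerode relation $u_1 \equiv_L u_2$; hence the classes of the border classification are in bijection with the classes of $\equiv_L$, and the index of the border classification equals the index of $L$. Completeness becomes the statement that every word $w$ lies in exactly one class through its unique representative $(w,\lambda)$, matching the fact that each word belongs to exactly one Myhill-Nerode class. The crossing-pair condition holds vacuously, since a crossing pair would require $v_2$ to be a \emph{proper} suffix of $v_1$, which is impossible when both suffixes are $\lambda$. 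Thus the ``complete, finite-index, crossing-free border classification'' of Theorem~\ref{thm:1}, read at $k=0$, is literally the Myhill-Nerode partition, and the state count of the induced automaton coincides with the index of $\equiv_L$, recovering also the minimal-automaton statement of the original theorem.

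The only point requiring genuine care is the step that forces the right head to be idle for the \emph{entire} computation rather than merely for part of it; everything else is a routine renaming of the general conditions into the classical ones. Once permanent idleness of the right head is established from $m=0$, no spurious extra classes can arise, so I expect no further obstacle.
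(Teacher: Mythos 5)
Your proposal is correct and follows exactly the route the paper intends (the paper itself omits an explicit proof of this corollary): specializing Theorem~\ref{k-ratedTHM} to $k=0=\frac{0}{1}$ so that the second head never moves, whence every presu has the form $(u,\lambda)$, presu-equivalence collapses verbatim to $\equiv_L$, completeness and crossing-freeness degenerate to the classical conditions, and the automaton synthesized in the second half of Theorem~\ref{thm:1} moves only its first head and is therefore an ordinary complete deterministic finite automaton. Two cosmetic remarks: the idleness of the second head in the synthesized automaton follows from the transition construction in the proof of Theorem~\ref{thm:1} (only first-head transitions get defined when all presus have empty suffix), not from Proposition~\ref{Compl-linDET}; and your closing claim that this recovers the minimal-automaton statement still tacitly uses the classical argument that any complete DFA for $L$ needs at least one state per $\equiv_L$-class, which is standard but not supplied.
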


Now, we discuss further properties of BCs and coin various open problems.

As each regular language is $k$-rated linear for %
 any positive rational $k$, there is already a large ambiguity to describe them based on Theorem \ref{k-ratedTHM} by fixing the value of $k$ in almost arbitrary way. An interesting question could be how we can find a value of $k$ such that the number of classes will be optimal, i.e., maybe less than their number in the original $k=0$ case. Could it also happen that a minimal representation of a regular language is not connected to any specific value of $k$, that is, the representation does not consider the language as a fix-rated linear?

Now, on the other hand, when a general 2detLIN language is considered,
we know that there is a BC for it that has the finite index property. On the other hand,
there could be various complete BCs without crossing pairs with finite indices for the same language. Thus,
neither the classes, %
 nor their number, nor the number of states of an accepting  complete deterministic linear automaton are uniquely defined.
Therefore, to find the minimal value of classes and/or the minimal number of states of a complete deterministic linear automaton accepting the language are also open questions.

Furthermore, since the linear automata have two heads, we already have some kind of ambiguity based on that, i.e., the order in which the heads process the input may vary from one automaton to other accepting the same language. Moreover, if the order of head movements does not fit for the language, one may also find BC with infinite index representing a 2detLIN language. This can be done, e.g., in the way how a non-regular language is characterized by the original Myhill-Nerode classes:
If one uses in the BC only pairs, where, let us say, the second element, the suffix is always $\lambda$ guessing that the language can be processed by a linear automaton where only the first head is used. We get a complete BC without crossing pairs, but since the language is not regular, this BC has an infinite index (similarly as it has infinite index by using only prefixes).
Therefore, it is crucial to find a kind of efficient representation with a BC to prove that the language is in 2detLIN.

Therefore, we may conclude that in general, we may not be able to identify a 2detLIN language by a given BC.
More precisely, for the same language there are various BCs, but for a BC, the language is
precisely defined if it is also known which of the equivalent classes contain presus $(u,v)$ with the property that $uv\in L$. As we have no bijection between BCs and languages,
  trying to apply this method for language equivalence in general, may need some further techniques to be involved.

Finally, we show  %
  another way how our %
  result is applicable. %
Note that various closure properties of 2detLIN were established in \cite{Nagy2013} and in \cite{UCNC18,ActInf2detLIN}. %

\begin{proposition}\label{prop:complem}
Let $L$ be a 2detLIN language. For $L^c$, the complement of $L$, the same partitions, i.e., equivalence classes can be used as for $L$.
\end{proposition}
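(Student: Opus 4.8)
The plan is to show that the prefix-suffix equivalence relation induced by $L$ coincides \emph{exactly} with the one induced by $L^c$, so that any BC constructed for $L$ serves verbatim as a BC for $L^c$. First I would recall that $(u_1,v_1)$ and $(u_2,v_2)$ are equivalent with respect to $L$ precisely when $u_1 w v_1 \in L \Leftrightarrow u_2 w v_2 \in L$ for every $w\in T^*$. Negating both sides of the biconditional inside the universal quantifier, this is equivalent to requiring $u_1 w v_1 \in L^c \Leftrightarrow u_2 w v_2 \in L^c$ for every $w\in T^*$, which is exactly the defining condition of equivalence with respect to $L^c$. Hence the two equivalence relations, and therefore their partitions into classes, are literally identical.

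The second step is to observe that the two structural side-conditions imposed on a BC in Theorem~\ref{thm:1} --- completeness and the absence of crossing pairs --- are properties of the underlying collection of presus and of the partition itself, not of language membership. Completeness asks that for every $w\in T^*$ exactly one pair $(u,v)$ with $w=uv$ appears; the crossing-pair condition only refers to prefix/suffix containment among pairs present in the BC. Since for $L^c$ we keep exactly the same presus grouped into exactly the same classes (by the first step), both properties are inherited unchanged, and the finite index is preserved as well.

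Combining these, a complete BC with finite index and no crossing pairs for $L$ --- which exists by the forward direction of Theorem~\ref{thm:1}, as $L$ is in 2detLIN --- is simultaneously such a BC for $L^c$. The backward direction of Theorem~\ref{thm:1} then yields that $L^c$ is in 2detLIN. I would finish by remarking on the accepting states of the associated automaton: taking $w=\lambda$ in the equivalence shows that every presu in a class $C_i$ agrees on whether $uv\in L$, so each class sits uniformly ``inside'' or ``outside'' $L$. The automaton for $L^c$ is then identical to the one for $L$ except that the roles of accepting and non-accepting classes are interchanged.

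There is essentially no hard step here; the only point requiring mild care is to keep the partition (which depends solely on the membership pattern and is what transfers) distinct from a particular realized BC, and to confirm that completeness and crossing-freeness genuinely never reference $L$ itself, only the family of chosen pairs.
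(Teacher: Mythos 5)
Your proof is correct, but it takes a genuinely different route from the paper's. The paper argues on the automaton side: it takes a completely defined deterministic linear automaton $A=(Q,T,q_0,\delta,F)$ for $L$, observes that the automaton $A'=(Q,T,q_0,\delta,Q\setminus F)$ accepts $L^c$, and concludes that since the two automata share their states and transition function, the equivalence classes of presus induced by them (via the construction of Algorithm~1, which never consults $F$) are identical. You instead argue at the level of the equivalence relation itself: since a biconditional is preserved under negating both sides, $(u_1,v_1)\equiv_L(u_2,v_2)$ holds iff $(u_1,v_1)\equiv_{L^c}(u_2,v_2)$, so the two relations are literally the same, and the structural side conditions (completeness, crossing-freeness, finite index) never reference $L$, so they transfer verbatim. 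Your argument is more elementary and in fact proves something slightly stronger: \emph{every} BC for $L$ is a BC for $L^c$, not merely the particular one derived from an accepting automaton. What the paper's route buys in exchange is constructiveness: it directly exhibits the complete deterministic linear automaton for $L^c$ (hence closure of 2detLIN under complement) without passing through both directions of Theorem~\ref{thm:1}, whereas your derivation of the same closure fact leans on the full strength of that theorem. Your closing observation --- that setting $w=\lambda$ shows each class is uniformly inside or outside $L$, so only the accepting/non-accepting roles of classes flip --- correctly recovers the paper's point that only $F$ is complemented.
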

\begin{proof}
 Let a completely defined linear automaton $A = (Q,T,q_0,\delta,F)$ for $L$ be given. Then, it has the same set $Q$ of states as a completely defined automaton $A'$ accepting
 $L^c$, with the same transition function. Only the set of accepting states is complemented, i.e., in $A'$ it is $Q\setminus F$. Thus
 based on the transition function and on the set of states, the equivalence classes of presus are the same for these two languages.
\qed  \end{proof}

\section{Conclusions}

The class of sensing $5'\to 3'$ Watson-Crick automata, as well as the class of linear automata, accept
  exactly the linear %
   languages \cite{Rouss,DNA2008,Nagy2013,AFL2017}.
  Their deterministic counterparts accept a class that is a proper superset of the class of regular, but at the same time, it is a proper subset of the class of linear languages. This class is denoted by 2detLIN. Based on deterministic linear automata and on the way they do their computations on the input, we characterized the languages of this class by using equivalent prefix-suffix pairs (abbreviated as presus in the paper, while their partitioning into equivalence classes %
   is abbreviated as BC standing for border classification). We have shown that if there is complete BC for a language $L$ with finitely many equivalence classes without crossing pairs, then the language $L$ is in 2detLIN and vice versa.
  In this way, by our results, on the one hand, the class 2detLIN can be further analysed using this new type of description.
The connection of the number of equivalence classes and the number of states in an accepting minimal complete deterministic linear automaton is not as straightforward as in the case of regular languages. In case of regular languages, the  equivalence classes based only on the prefixes are used and their number is the same as the number of states of a minimal completely defined deterministic finite automaton accepting the language. However,
we believe that the characterization presented here can be connected to a descriptional complexity measure
for 2detLIN, or at least for the class of fixed linear languages, i.e., for a proper superclass  of the set of regular languages. The next steps to this direction are left for future research.
On the other hand, for some languages we are able also to prove that they are not in the class 2detLIN based on our results.

\section*{Acknowledgments} The author is very grateful to the reviewers for their valuable comments.

\bibliographystyle{eptcs}
\bibliography{References}

\end{document}